\definecolor{Gray}{gray}{0.9}
\newtheorem{theorem}{Theorem}[section]
\newtheorem{lem}[theorem]{Lemma}
\newtheorem{problem}{Problem}
\newtheorem{assumption}{Assumption}
\newtheorem{proposition}{Proposition}
\newtheorem{remark}{Remark}[section]
\begin{document}

\title{Optimal Vaccine Allocation to Control Epidemic Outbreaks in Arbitrary
Networks}

\author{Victor M. Preciado, Michael Zargham, Chinwendu Enyioha, Ali Jadbabaie, and George Pappas %
\thanks{The authors are with the Department of Electrical and Systems Engineering
at the University of Pennsylvania, Philadelphia PA 19104.  %
}}
\maketitle
\begin{abstract}
We consider the problem of controlling the propagation of an epidemic
outbreak in an arbitrary contact network by distributing vaccination
resources throughout the network. We analyze a networked version of
the Susceptible-Infected-Susceptible (SIS) epidemic model when individuals
in the network present different levels of susceptibility to the epidemic.
In this context, controlling the spread of an epidemic outbreak can
be written as a spectral condition involving the eigenvalues of a
matrix that depends on the network structure and the parameters of
the model. We study the problem of finding the optimal distribution
of vaccines throughout the network to control the spread of an epidemic
outbreak. We propose a convex framework to find cost-optimal distribution
of vaccination resources when different levels of vaccination are
allowed. We also propose a greedy approach with quality guarantees
for the case of all-or-nothing vaccination. We illustrate our approaches
with numerical simulations in a real social network.
\end{abstract}

\section{Introduction}

Motivated by the problem of epidemic spread in human networks, we
analyze the problem of controlling the spread of a disease by distributing
vaccines throughout the individuals in a contact network. The problem
of controlling spreading processes in networks appear in many different
settings, such as epidemiology \cite{Bai75,AM91}, computer viruses
\cite{GGT03}, or viral marketing \cite{LAH07}. The dynamic of the
spread depends on both the structure of the contact network, the epidemic
model and the values of the parameters associated to each individual.
We model the spread using a recently proposed variant of the popular
SIS epidemic model in which the infection rate is allowed to vary
among the set of individuals in the network \cite{MOK09}. In our
setting, we can modify the individual infection rates, within a feasible
range, by injecting different levels of vaccination in each node.
Injecting a particular level of vaccination in a node has also an
associated cost, which can vary from individual to individual. In
this context, we propose efficient convex framework to find the optimal
distribution of vaccination resources throughout the networks.

The dynamic behavior of spreading processes in networks have been
widely studied. In \cite{New02}, Newman studied the epidemic thresholds
on several random graphs models. Pastor-Satorras and Vespignani studied
viral propagation in power-law networks \cite{PV01}. This initial
work was followed by a long list of papers aiming to study the spread
in more realistic network models. Boguna and Pastor-Satorras \cite{BP02}
considered the spread of a virus in correlated networks, where the
connectivity of a node is related to the connectivity of its neighbors.
In \cite{PJ09}, the authors analyze spreading processes in random
geometric networks. The analysis of spreading processes in arbitrary
contact networks was first studied by Wang et al. \cite{WCWF03} for
the case of discrete-time dynamics. In \cite{GMT05}, Ganesh et al.
proposed a continuous-time Markov process to relate the speed of spreading
with the largest eigenvalue of the adjacency matrix of the contact
network. The connection between the speed of spreading and the spectral
radius of the network was also found for a wide range of spreading
models in \cite{CWWLF08}. The relationship between the spectral radius
of a contact network and its local structural properties were explored
in \cite{PJ13,PJD13}.

The development of strategies to control the dynamic of a spread process
is a central problem in public health and network security. In \cite{BCGS10},
Borgs et al. proposed a probabilistic analysis, based on the theory
of contact processes, to characterize the optimal distribution of
a fixed amount of antidote in a given contact network. In \cite{AAITF13},
Aditya et al. proposed several heuristics to immunize individuals
in a network to control virus spreading processes. In the control
systems literature, Wan et al. proposed in \cite{WRS08} a method
to design optimal strategies to control the spread of a virus using
eigenvalue sensitivity analysis ideas together with constrained optimization
methods. Our work is closely related to the work in \cite{GOM11}
and \cite{DC12}, in which a continuous-time time Markov processes,
called the N-intertwined model, is used to analyze and control the
spread of a SIS epidemic model.

In this paper, we propose a convex optimization framework to efficiently
find the cost-optimal distribution of vaccination resources in an
arbitrary contact network. In our work, we use a heterogeneous version
of the N-intertwined SIS model \cite{MOK09} to model a spread process
in a network of individuals with different rate of being infected
and recovered. We assume that we can modify the rates of infection
of individuals, within a feasible range, by distributing vaccines
to the individuals in the network. We assume that there is a cost
associated to injecting a particular amount of vaccination resources
to a each individual, where the cost function can vary from individual
to individual. Our aim is to find the optimal distribution of vaccination
resources throughout the network in order to control the spread of
an initial infection at a minimal cost. We consider two version of
this problem: (\emph{i}) The \emph{fractional case,} in which we are
allowed to inject a fractional amount of vaccination resources in
each node of the network, and (\emph{ii}) the \emph{combinatorial
case,} in which we fully vaccinate a selection of individuals in the
network, leaving the rest of nodes unvaccinated.

The paper is organized as follows. In Section II, we introduce our
notation, as well as some background needed in our derivations. In
Section III, we formulate our problem and provide an efficient solution
based on convex optimization. In Section IV, we study a combinatorial version of the problem studied in Section III and provide a greedy heuristic algorithm with a quality guarantee. We include some conclusions in Section V.

\section{\label{Notation}Notation \& Preliminaries}

In this section we introduce some graph-theoretical nomenclature and
the dynamic spreading model under consideration.

\subsection{Graph Theory}

Let $\mathcal{G}=\left(\mathcal{V},\mathcal{E}\right)$ denote an
undirected graph with $n$ nodes, $m$ edges, and no self-loops%
\footnote{An undirected graph with no self-loops is also called a \emph{simple}
graph.%
}. We denote by $\mathcal{V}\left(\mathcal{G}\right)=\left\{ v_{1},\dots,v_{n}\right\} $
the set of nodes and by $\mathcal{E}\left(\mathcal{G}\right)\subseteq\mathcal{V}\left(\mathcal{G}\right)\times\mathcal{V}\left(\mathcal{G}\right)$
the set of undirected edges of $\mathcal{G}$. If $\left\{ i,j\right\} \in\mathcal{E}\left(\mathcal{G}\right)$
we call nodes $i$ and $j$ \emph{adjacent} (or neighbors), which
we denote by $i\sim j$. We define the set of neighbors of a node
$i\in\mathcal{V}$ as $\mathcal{N}_{i}=\{j\in\mathcal{V}\left(\mathcal{G}\right):\left\{ i,j\right\} \in\mathcal{E}\left(\mathcal{G}\right)\}$.
The number of neighbors of $i$ is called the \emph{degree} of node
i, denoted by $d_{i}$. The adjacency matrix of an undirected graph
$\mathcal{G}$, denoted by $A_{\mathcal{G}}=[a_{ij}]$, is an $n\times n$
symmetric matrix defined entry-wise as $a_{ij}=1$ if nodes $i$ and
$j$ are adjacent, and $a_{ij}=0$ otherwise%
\footnote{For simple graphs, $a_{ii}=0$ for all $i$.%
}. Since $A_{\mathcal{G}}$ is symmetric, all its eigenvalues, denoted
by $\lambda_{1}(A_{\mathcal{G}})\geq\lambda_{2}(A_{\mathcal{G}})\geq\ldots\geq\lambda_{n}(A_{\mathcal{G}})$,
are real.

\subsection{N-Intertwined SIS Epidemic Model}

Our modeling approach is based on the N-intertwined SIS model proposed
by Van Mieghem et at. in \cite{MOK09}. In contrast with previously
proposed models, the N-intertwined model is a continuous-time networked
Markov process with $2^{n}$ states able to model the dynamics of
a viral infection in an arbitrary contact network. Using the Kolmogorov
forward equations and a mean-field approach, one can approximate the
dynamics of the viral spread using a system of $n$ ordinary differential
equations, as follows. Consider a network of $n$ individuals described
by the adjacency matrix $A_{\mathcal{G}}=\left[a_{ij}\right]$. The
infection probability of an individual at node $i\in\mathcal{V\left(G\right)}$
at time $t\geq0$ is denoted by $p_{i}(t)$. Let us assume, for now,
that the viral spreading is characterized by two positive parameters--a
constant infection rate $\beta\geq0$ and a curing rate $\delta\geq0$.
Hence, the N-intertwined SIS model in \cite{MOK09} is described by
the following set of $n$ ODE's:
\begin{equation}
\frac{dp_{i}\left(t\right)}{dt}=\left(1-p_{i}\left(t\right)\right)\beta\sum_{j=1}^{n}a_{ij}p_{j}\left(t\right)-\delta p_{i}\left(t\right),\label{eq:Mieghem}
\end{equation}
for $i=1,\ldots,n$.

As proved in \cite{MOK09}, the exact probability of infection is
upper bounded by its approximation $p_{i}\left(t\right)$. A local
stability analysis of the above system of ODE's around the disease-free
equilibrium, $p_{i}=0$ for all $i$, provides the following result
\cite{MOK09}:
\begin{proposition}
\label{prop:HomogeneousSISstability}Consider the N-intertwined SIS
epidemic model in (\ref{eq:Mieghem}). Then, an initial infection
converge to zero exponentially fast if
\[
\lambda_{1}\left(A_{\mathcal{G}}\right)<\frac{\delta}{\beta}.
\]

\end{proposition}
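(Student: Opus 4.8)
The plan is to linearize the system of ODEs in \eqref{eq:Mieghem} about the disease-free equilibrium $p^{*}=0$ and to show that this equilibrium is locally exponentially stable precisely under the stated spectral condition. First I would compute the Jacobian of the right-hand side of \eqref{eq:Mieghem}. Writing $f_{i}(p)=\left(1-p_{i}\right)\beta\sum_{j}a_{ij}p_{j}-\delta p_{i}$, the product structure $\left(1-p_{i}\right)p_{j}$ contributes a term that is linear in $p$ plus a quadratic remainder; evaluating $\partial f_{i}/\partial p_{k}$ at $p=0$ kills every quadratic contribution, so the linearization is clean. I expect to obtain
\begin{equation}
\left.\frac{\partial f_{i}}{\partial p_{k}}\right|_{p=0}=\beta\,a_{ik}-\delta\,\mathbf{1}_{\{i=k\}},\label{eq:jacobian}
\end{equation}
so that the Jacobian is exactly $J=\beta A_{\mathcal{G}}-\delta I$.

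Next I would invoke the standard Hartman--Grobman / Lyapunov principle of first approximation: the equilibrium $p=0$ is locally asymptotically (indeed exponentially) stable if every eigenvalue of $J$ has strictly negative real part, and the decay rate is governed by the spectral abscissa of $J$. Since $A_{\mathcal{G}}$ is symmetric its eigenvalues $\lambda_{1}(A_{\mathcal{G}})\geq\cdots\geq\lambda_{n}(A_{\mathcal{G}})$ are real, and the eigenvalues of $J=\beta A_{\mathcal{G}}-\delta I$ are simply $\beta\lambda_{i}(A_{\mathcal{G}})-\delta$. The condition that all of these be negative is dominated by the largest, namely $\beta\lambda_{1}(A_{\mathcal{G}})-\delta<0$, which rearranges to $\lambda_{1}(A_{\mathcal{G}})<\delta/\beta$. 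This gives the stated threshold, and because $J$ is symmetric with spectral abscissa $\beta\lambda_{1}(A_{\mathcal{G}})-\delta<0$, the convergence of the linearized trajectories is exponential at rate $\delta-\beta\lambda_{1}(A_{\mathcal{G}})$.

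The main technical point to be careful about, rather than a genuine obstacle, is justifying the passage from the linearized stability to exponential decay of the actual nonlinear system near $p=0$; this is precisely the content of the theorem of stability in the first approximation for a hyperbolic equilibrium, and since $J$ is symmetric one can make it fully self-contained by exhibiting the quadratic Lyapunov function $V(p)=\tfrac{1}{2}\,p^{\top}p$ and showing $\dot{V}\leq \bigl(\beta\lambda_{1}(A_{\mathcal{G}})-\delta\bigr)\,p^{\top}p$ plus higher-order terms that are dominated in a neighborhood of the origin. I would also note, for rigor, that the dynamics remain in the physically meaningful region $p_{i}\in[0,1]$, so the local analysis applies to all admissible initial conditions sufficiently close to the origin. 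Since the result is attributed to \cite{MOK09}, I expect the intended proof to be exactly this linearization argument, with the eigenvalue condition read off from the symmetry of $A_{\mathcal{G}}$.
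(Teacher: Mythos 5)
Your linearization argument is correct for what it proves, and it matches the paper's own framing: the paper offers no proof of this proposition at all, merely attributing it to a ``local stability analysis'' in \cite{MOK09}, and your Jacobian computation $J=\beta A_{\mathcal{G}}-\delta I$ with eigenvalues $\beta\lambda_{i}(A_{\mathcal{G}})-\delta$ is exactly that analysis. The genuinely different route is the one the paper takes for the heterogeneous generalization (Proposition \ref{prop:Heterogeneous SIS stability condition}), and it is worth contrasting because it buys strictly more. There, the authors observe that the discarded nonlinear term $-\beta_{i}p_{i}\sum_{j}a_{ij}p_{j}$ is nonpositive whenever $p\geq0$, so the linear system $\dot{\hat{p}}=(\beta A_{\mathcal{G}}-\delta I)\hat{p}$ \emph{upper-bounds} the nonlinear trajectory from the same initial condition by a comparison argument; stability of the linear system then forces \emph{global} exponential convergence for every $\boldsymbol{p}(0)\in[0,1]^{n}$, not merely for initial conditions in a neighborhood of the origin. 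Your Hartman--Grobman appeal only delivers the local statement, and your own Lyapunov sketch actually contains the missing observation without exploiting it: the ``higher-order terms'' in $\dot{V}$ are $-\sum_{i}\beta p_{i}^{2}\sum_{j}a_{ij}p_{j}\leq0$ on the nonnegative orthant, so they need not be ``dominated in a neighborhood'' --- they help everywhere, and $\dot{V}\leq(\beta\lambda_{1}(A_{\mathcal{G}})-\delta)\,p^{\top}p$ holds globally on the invariant set $[0,1]^{n}$. If you add that one sign observation (together with a brief justification that $[0,1]^{n}$ is forward invariant, which you currently assert without proof), your argument upgrades to the global statement and becomes essentially the specialization of the paper's Proposition \ref{prop:Heterogeneous SIS stability condition} proof to $\beta_{i}=\beta$, $\delta_{i}=\delta$.
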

The above provides a simple condition to guarantee a controlled epidemic
dynamics in terms of the largest eigenvalue of the adjacency matrix.
In the following, we derive a similar condition when the infection
parameters vary from individual to individual within the network. 

\subsection{Non-Homogeneous N-Intertwined SIS Epidemic Model}

A direct extension of the N-intertwined model for node-specific infection
and curing rates, $\beta_{i}$ and $\delta_{i}$, is
\[
\frac{dp_{i}\left(t\right)}{dt}=\left(1-p_{i}\left(t\right)\right)\beta_{i}\sum_{j=1}^{n}a_{ij}p_{j}\left(t\right)-\delta_{i}p_{i}\left(t\right).
\]
We can write the above dynamics in matrix form as 
\begin{equation}
\frac{d\boldsymbol{p}\left(t\right)}{dt}=\left(BA_{\mathcal{G}}-D\right)\boldsymbol{p}\left(t\right)-P\left(t\right)BA_{\mathcal{G}}\boldsymbol{p}\left(t\right),\label{eq:heteroSIS}
\end{equation}
where $\boldsymbol{p}\left(t\right)=\left(p_{1}\left(t\right),\ldots,p_{n}\left(t\right)\right)^{T}$,
$B=diag(\beta_{i})$, $D=diag\left(\delta_{i}\right)$, and $P\left(t\right)=diag(p_{i})$.
Concerning the non-homogeneous epidemic model, we have the following
result:
\begin{proposition}
\label{prop:Heterogeneous SIS stability condition}Consider the heterogeneous
N-intertwined SIS epidemic model in (\ref{eq:heteroSIS}). Then, if
\[
\lambda_{1}\left(BA-D\right)\leq-\varepsilon,
\]
an initial infection $\boldsymbol{p}\left(0\right)\in\left[0,1\right]^{n}$
will converge to zero exponentially fast, i.e., there exists an $\alpha>0$
such that \textup{$\left\Vert p_{i}\left(t\right)\right\Vert \leq\alpha\left\Vert p_{i}\left(0\right)\right\Vert e^{-\varepsilon t}$,
for all $t\geq0$.}\end{proposition}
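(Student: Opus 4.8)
The plan is to establish the result in three moves: first show that the hypercube $[0,1]^n$ is forward invariant (so in particular $\boldsymbol p(t)\geq 0$ componentwise for all $t$), then use this sign information to dominate the nonlinear dynamics by a linear one, and finally bound the resulting linear flow using the spectral hypothesis.

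First, I would verify invariance of $[0,1]^n$ by a Nagumo-type boundary argument applied directly to the scalar equations $\dot p_i=(1-p_i)\beta_i\sum_j a_{ij}p_j-\delta_i p_i$. On the face $p_i=0$ (with the remaining coordinates in $[0,1]$) one has $\dot p_i=\beta_i\sum_j a_{ij}p_j\geq 0$, so trajectories cannot exit through $p_i=0$; on the face $p_i=1$ one has $\dot p_i=-\delta_i\leq 0$, so they cannot exit through $p_i=1$. Hence $\boldsymbol p(t)\in[0,1]^n$, and in particular $\boldsymbol p(t)\geq 0$, for all $t\geq 0$.

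Second, writing $M:=BA-D$, I note that $M$ is a \emph{Metzler} matrix: its off-diagonal entries are $\beta_i a_{ij}\geq 0$ (using $a_{ii}=0$), so $e^{Mt}$ is entrywise nonnegative for every $t\geq 0$. Because $\boldsymbol p(t)\geq 0$ and $B,A$ are entrywise nonnegative, the nonlinear term satisfies $P(t)BA\boldsymbol p(t)\geq 0$ componentwise, whence $\dot{\boldsymbol p}(t)\leq M\boldsymbol p(t)$ entrywise. The comparison lemma for cooperative (Metzler) systems then yields $0\leq \boldsymbol p(t)\leq e^{Mt}\boldsymbol p(0)$ componentwise. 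The final step is to bound $\|e^{Mt}\boldsymbol p(0)\|$. The key observation is that, although $M$ is not symmetric, it is \emph{similar} to the symmetric matrix $S:=B^{1/2}AB^{1/2}-D$ via $B^{-1/2}MB^{1/2}=S$ (using that $B$ and $D$ are diagonal and hence commute). Consequently $M$ has real eigenvalues with $\lambda_1(M)=\lambda_1(S)\leq-\varepsilon$, and $e^{Mt}=B^{1/2}e^{St}B^{-1/2}$. Since $S$ is symmetric, $\|e^{St}\|_2=e^{\lambda_1(S)t}\leq e^{-\varepsilon t}$, so
\[
\|e^{Mt}\|_2\leq\|B^{1/2}\|_2\,\|e^{St}\|_2\,\|B^{-1/2}\|_2\leq\alpha\,e^{-\varepsilon t},\qquad \alpha:=\sqrt{\max_i\beta_i/\min_i\beta_i}.
\]
Because $0\leq\boldsymbol p(t)\leq e^{Mt}\boldsymbol p(0)$ with both vectors nonnegative, monotonicity of the Euclidean norm under componentwise domination gives $\|\boldsymbol p(t)\|\leq\|e^{Mt}\boldsymbol p(0)\|\leq\alpha e^{-\varepsilon t}\|\boldsymbol p(0)\|$, which is the claimed exponential decay.

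I expect the delicate step to be the passage from the spectral condition on the nonsymmetric $M$ to the exponential norm bound: for a generic matrix, $\lambda_1(M)\leq-\varepsilon$ only yields $\|e^{Mt}\|\lesssim (1+t^{k})e^{-\varepsilon t}$, and the spurious polynomial factor would break the clean $e^{-\varepsilon t}$ estimate. The similarity $M=B^{1/2}SB^{-1/2}$ to a symmetric matrix is exactly what removes this difficulty, at the cost of the constant $\alpha$. A secondary point needing care is the rigorous justification of the comparison lemma (for instance by perturbing to $M+\eta I$ to obtain strict differential inequalities and then letting $\eta\downarrow 0$), together with checking that $B\succ 0$ so that $B^{\pm 1/2}$ are well defined.
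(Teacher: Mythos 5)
Your proof is correct and follows essentially the same route as the paper: dominate the nonlinear dynamics by the linear system $\dot{\boldsymbol p}\leq (BA-D)\boldsymbol p$ and then exploit the similarity of $BA-D$ to the symmetric matrix $B^{1/2}AB^{1/2}-D$ to convert the spectral hypothesis into an exponential bound. The difference is one of rigor rather than strategy: the paper simply asserts that the linear system upper-bounds the nonlinear one and that $\lambda_1\leq-\varepsilon$ gives rate-$\varepsilon$ decay, whereas you supply the three ingredients those assertions actually require --- forward invariance of $[0,1]^n$ (so the discarded term $P(t)BA_{\mathcal G}\boldsymbol p(t)$ is indeed nonnegative), the Metzler/cooperative structure of $BA-D$ (without which a componentwise differential inequality does not imply a componentwise solution bound), and the explicit constant $\alpha=\sqrt{\max_i\beta_i/\min_i\beta_i}$ that rules out polynomial prefactors. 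Your version is the complete argument the paper's sketch is implicitly relying on.
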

\begin{IEEEproof}
First, we have 
\begin{align*}
\frac{dp_{i}\left(t\right)}{dt} & =\beta_{i}\sum_{j=1}^{n}a_{ij}p_{j}\left(t\right)-\delta_{i}p_{i}\left(t\right)-\beta_{i}p_{i}\left(t\right)\sum_{j=1}^{n}a_{ij}p_{j}\left(t\right)\\
\leq & \beta_{i}\sum_{j=1}^{n}a_{ij}p_{j}\left(t\right)-\delta_{i}p_{i}\left(t\right),
\end{align*}
since $\beta_{i}$, $\delta_{i}$, $p_{i}\left(t\right)$,$a_{ij}\geq0$.
Therefore, the linear dynamic system
\begin{equation}
\frac{d\hat{p}_{i}\left(t\right)}{dt}=\beta_{i}\sum_{j=1}^{n}a_{ij}\hat{p}_{j}\left(t\right)-\delta_{i}\hat{p}_{i}\left(t\right),\label{eq:SISlinear}
\end{equation}
upper-bounds the nonlinear dynamical system (\ref{eq:heteroSIS})
when they share the same initial conditions, i.e., $\hat{\boldsymbol{p}}\left(t\right)\geq\boldsymbol{p}\left(t\right)$
for $t\geq0$ when $\hat{\boldsymbol{p}}\left(0\right)=\boldsymbol{p}\left(0\right)$.

This linear dynamic system can be written in matrix form as
\[
\frac{d\hat{\boldsymbol{p}}\left(t\right)}{dt}=\left(BA_{\mathcal{G}}-D\right)\hat{\boldsymbol{p}}\left(t\right).
\]
For the above linear system to be stable, we need the eigenvalues
of $BA-D$ to be in the open left half-plane. The state matrix $BA_{\mathcal{G}}-D$
has real eigenvalues, since it can be transform via a similarity transformation
to the symmetric matrix $B^{1/2}A_{\mathcal{G}}B^{1/2}-D$. Hence,
exponential asymptotic stability, with an exponential rate $\varepsilon$,
is equivalent to the largest eigenvalue $\lambda_{1}\left(BA_{\mathcal{G}}-D\right)<-\varepsilon$.
\end{IEEEproof}
In the above analysis, we have shown that the linear dynamics in (\ref{eq:SISlinear})
upper-bounds the mean-field approximation in (\ref{eq:heteroSIS});
thus, the spectral result in Proposition \ref{prop:Heterogeneous SIS stability condition}
is a sufficient condition to control the evolution of an epidemic
outbreak. In the following section, we use this result to characterize
the profiles of infection rates that results in a stable linear dynamics.

\section{A Convex Framework for Optimal Resource Allocation}

Our main aim is to propose an efficient optimization framework to
find the optimal distribution of vaccines to control the spread of
an epidemic outbreak in a given network. In this section, we consider
the fractional vaccination problem. In the fractional case, we assume
that we are able to modify the infections rates $\beta_{i}$ in the
network by distributing vaccination resources throughout the individuals
in the network. We assume that the infection rates of each individual
can be modified within a particular feasible interval, $\underline{\beta}_{i}\leq\beta_{i}\leq\bar{\beta}_{i}$,
where $\bar{\beta}_{i}>0$ is the value of the natural infection rate
for node $i$, which is achieved in the absence of any nodal immunization,
and $\underline{\beta}_{i}>0$ is the minimum possible infection rate
for node $i$, which is achieved when we allocate a large amount of
vaccines at node $i$. In Section \ref{sec:Combinatorial-Resource-Allocatio},
we will consider a combinatorial version of the above fractional strategies.
In the combinatorial case, we will assume that the infection rate
can only take one of two values, $\beta_{i}\in\left\{ \underline{\beta}_{i},\bar{\beta}_{i}\right\} $.
In the fractional case considered in this section, we propose an optimization
framework to find the optimal distribution of resources when there
is a cost function function associated to different values of $\beta_{i}$.

\subsection{Vaccination Cost}

The cost of achieving a particular infection rate for node $i$ is
denoted by $f_{i}\left(\beta_{i}\right)$. This cost function is node-dependent
and presents the following properties:
\begin{enumerate}
\item The cost of achieving the natural infection rate is zero, i.e., $f_{i}\left(\bar{\beta}_{i}\right)=0$.
\item The maximum cost of vaccinating node $i$, denoted by $T_{i}$, is
achieved at the minimum infection rate, i.e., $\max_{\beta_{i}}f_{i}\left(\beta_{i}\right)=f_{i}\left(\underline{\beta}_{i}\right)\triangleq T_{i}$.
\item The vaccination cost function is monotonically decreasing in the interval
$\beta_{i}\in\left[\underline{\beta}_{i},\bar{\beta}_{i}\right]$.
\end{enumerate}
Apart from the above properties, we make the following convexity assumptions
on the cost function $f_{i}$ to obtain a tractable convex framework:

\begin{assumption} The vaccination cost function, $f_{i}\left(\beta_{i}\right)$,
is twice differentiable and satisfies the following constrain:
\begin{equation}
f''_{i}\left(\beta_{i}\right)\geq-\frac{2}{\beta_{i}}f'_{i}\left(\beta_{i}\right),\label{eq:StrongConvexity}
\end{equation}
for $\beta_{i}\in\left[\underline{\beta}_{i},\bar{\beta}_{i}\right]$.
\end{assumption}

Notice that, since $f_{i}$ is monotonically decreasing, we have that
$f'_{i}\left(\beta_{i}\right)<0$; thus, we have that Assumption 1
implies that $f''_{i}\left(\beta_{i}\right)>0$. In other words, Assumption
1 is stronger than convexity. For example, a function that satisfies
Assumption 1 with equality is:
\begin{equation}
f_{i}\left(\beta_{i}\right)=T_{i}\frac{\beta_{i}^{-1}-\bar{\beta}_{i}^{-1}}{\underline{\beta}_{i}^{-1}-\bar{\beta}_{i}^{-1}}.\label{eq:Quasiconvex Limit}
\end{equation}
In practice, for low values of $\underline{\beta}_{i}$ and $\bar{\beta}_{i}$,
this function takes a shape of practical interest. For example. in
Fig. \ref{fig_1} we plot the function in (\ref{eq:Quasiconvex Limit}) for
$\underline{\beta}_{i}=1.75e-3$, $\bar{\beta}_{i}=8.66e-3$, and
$T_{i}=1$. In the abscissa of this plot, we represent the vaccination
cost $f_{i}\left(\beta_{i}\right)$, which is in the range $\left[0,1\right]$.
We observe how the cost function is convex and presents diminishing
returns, since the reduction in the infection rate for a given amount
of investment is greater in the low-cost range than in the high-cost
range.
\begin{figure}[t]
\centering\includegraphics[width=0.45\textwidth]{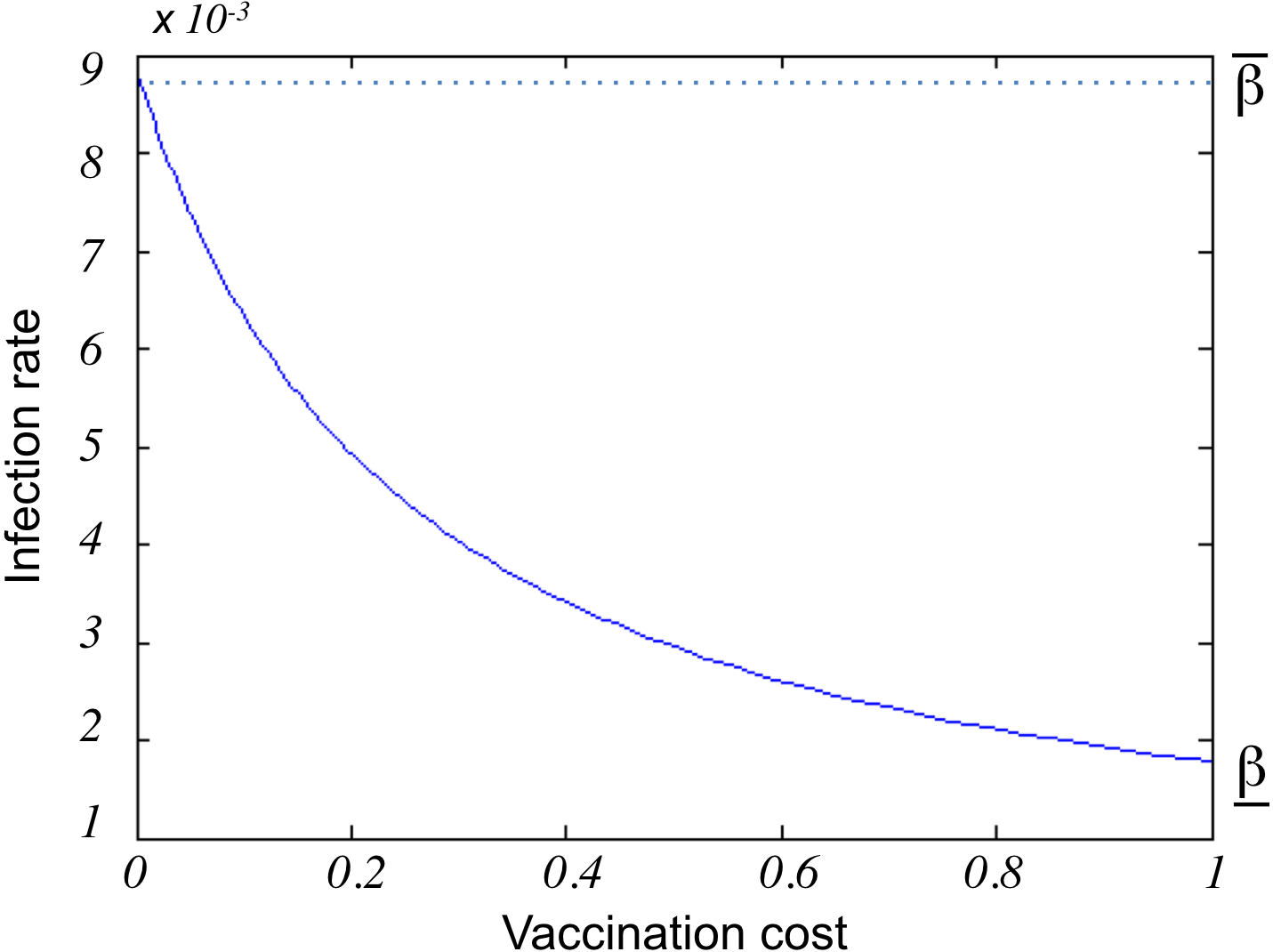}
\caption{Convex cost function in (\ref{eq:Quasiconvex Limit}).}
\label{fig_1}
\end{figure}

\subsection{Problem Statements}

In this subsection we propose an optimization framework to find the
cost-optimal allocation of vaccines in a given contact network $\mathcal{G}$
with adjacency matrix $A_{\mathcal{G}}$. In particular, we consider
the following problem:
\begin{problem}
\emph{\label{Problem:OptimalVaccineDistribution}Given a curing rate
profile, $\left\{ \delta_{i}:i\in\mathcal{V}\left(\mathcal{G}\right)\right\} $,
and a vaccination cost function $f_{i}\left(\beta_{i}\right)$} for
$\beta_{i}\in\left[\underline{\beta}_{i},\bar{\beta}_{i}\right]$,
\emph{find the optimal distribution of vaccines to control the propagation
of an epidemic outbreak with an asymptotic exponential decaying rate
$\varepsilon$ at a total minimum cost.}
\end{problem}
According to Proposition \ref{prop:Heterogeneous SIS stability condition},
this problem can be mathematically stated as the following optimization
problem:
\begin{eqnarray}
T^{*}=min_{\left\{ \beta_{i}\right\} } & \sum_{i=1}^{n}f_{i}\left(\beta_{i}\right)\nonumber \\
s.t. & \lambda_{1}\left(BA_{\mathcal{G}}-D\right)\leq-\varepsilon\label{eq:OptimalVaccine}\\
 & \underline{\beta}_{i}\leq\beta_{i}\leq\bar{\beta}_{i}, & i=1,\ldots,n,\nonumber 
\end{eqnarray}

In the following subsection, we propose a convex formulation to solve
this problem under Assumption 1.

\subsection{Semidefinite Programming (SDP) Approach}

Our formulation is based on writing the spectral stability condition
$\lambda_{1}\left(BA_{\mathcal{G}}-D\right)\leq-\varepsilon$ using
a simple semidefinite constrain. In particular, we have the following
result:
\begin{lem}
For $A_{\mathcal{G}}$ symmetric, $B=diag\left(\beta_{i}\right)$,
and $D=diag\left(\delta_{i}\right)$, we have that \textup{$\lambda_{1}\left(BA_{\mathcal{G}}-D\right)\leq-\varepsilon$
if and only if $\left(D-\varepsilon I\right)B^{-1}-A_{\mathcal{G}}\succeq0$.}\end{lem}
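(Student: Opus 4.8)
The plan is to reduce the eigenvalue condition to a semidefinite condition on a \emph{symmetric} matrix, and then transport it to the claimed form by a congruence transformation. First I would invoke the observation already recorded in the proof of Proposition~\ref{prop:Heterogeneous SIS stability condition}: since $B$ and $D$ are diagonal (hence commute) and each $\beta_i>0$, the non-symmetric matrix $BA_{\mathcal{G}}-D$ is related by the similarity transformation $X\mapsto B^{-1/2}XB^{1/2}$ to the symmetric matrix $M:=B^{1/2}A_{\mathcal{G}}B^{1/2}-D$. Similar matrices share the same spectrum, so $\lambda_{1}\left(BA_{\mathcal{G}}-D\right)=\lambda_{1}\left(M\right)$, and it suffices to characterize when $\lambda_{1}\left(M\right)\leq-\varepsilon$.

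Next, because $M$ is symmetric, the bound on its largest eigenvalue is equivalent to a matrix inequality in the positive semidefinite ordering: $\lambda_{1}\left(M\right)\leq-\varepsilon$ holds if and only if $M\preceq-\varepsilon I$, i.e.\ $-\left(M+\varepsilon I\right)\succeq0$. Writing this out, the spectral condition becomes
\[
D-\varepsilon I-B^{1/2}A_{\mathcal{G}}B^{1/2}\succeq0.
\]
This is the same condition, now phrased as positive semidefiniteness of a symmetric matrix.

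Finally I would apply a congruence transformation to clear the $B^{1/2}$ factors from the middle term. Since $B^{-1/2}$ is symmetric and invertible, $Y\succeq0$ if and only if $B^{-1/2}YB^{-1/2}\succeq0$. Applying this equivalence to the displayed inequality, the conjugated middle term collapses to $B^{-1/2}\left(B^{1/2}A_{\mathcal{G}}B^{1/2}\right)B^{-1/2}=A_{\mathcal{G}}$, while the remaining term becomes $B^{-1/2}\left(D-\varepsilon I\right)B^{-1/2}=\left(D-\varepsilon I\right)B^{-1}$, using once more that $D-\varepsilon I$ is diagonal and hence commutes with $B^{-1/2}$. This yields exactly $\left(D-\varepsilon I\right)B^{-1}-A_{\mathcal{G}}\succeq0$. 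Every step above is an equivalence, so both directions of the ``if and only if'' follow at once.

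The computation itself is routine; the one point that deserves care — and the only real subtlety — is tracking where symmetry is used. The passage from an eigenvalue bound to a semidefinite inequality in the second step is legitimate \emph{only} because $M$ is symmetric; for a general matrix, $\lambda_{1}\leq-\varepsilon$ does not translate into an inequality in the semidefinite ordering. This is precisely why the similarity reduction of the first step is essential rather than cosmetic, and why the commutativity of the diagonal matrices $B$ and $D$ is doing the load-bearing work throughout.
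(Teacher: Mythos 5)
Your proof is correct and follows essentially the same route as the paper's: symmetrize $BA_{\mathcal{G}}-D$ via the similarity $B^{-1/2}(\cdot)B^{1/2}$, rewrite the eigenvalue bound as $D-\varepsilon I-B^{1/2}A_{\mathcal{G}}B^{1/2}\succeq 0$, and then apply the congruence by $B^{-1/2}$ to reach $\left(D-\varepsilon I\right)B^{-1}-A_{\mathcal{G}}\succeq 0$. Your added remarks on where symmetry and the commutativity of the diagonal factors are used are accurate and, if anything, make the argument slightly more explicit than the paper's.
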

\begin{IEEEproof}
Notice that $BA_{\mathcal{G}}-D$ is a matrix similar to $B^{1/2}A_{\mathcal{G}}B^{1/2}-D$,
since we can pre- and post- multiply the former matrix by $B^{-1/2}$
and $B^{1/2}$, respectively, to obtain the latter. Hence, since $B^{1/2}A_{\mathcal{G}}B^{1/2}-D$
is a symmetric matrix with real eigenvalues, the eigenvalues of $BA_{\mathcal{G}}-D$,
including $\lambda_{1}\left(BA_{\mathcal{G}}\right)$, are all real.
Then, we have that $\lambda_{1}\left(BA_{\mathcal{G}}-D\right)\leq-\varepsilon$
if and only if $\lambda_{i}\left(\left(D-\varepsilon I\right)-BA_{\mathcal{G}}\right)=\lambda_{i}\left(\left(D-\varepsilon I\right)-B^{1/2}A_{\mathcal{G}}B^{1/2}\right)\geq0$,
which is equivalent to $\left(D-\varepsilon I\right)-B^{1/2}A_{\mathcal{G}}B^{1/2}\succeq0$.
Applying a congruence transformation to $\left(D-\varepsilon I\right)-B^{1/2}A_{\mathcal{G}}B^{1/2}$
by pre- and post-multiplying by $B^{-1/2}$, we obtain that $\lambda_{1}\left(BA_{\mathcal{G}}-D\right)\leq-\varepsilon$
if and only if $\left(D-\varepsilon I\right)B^{-1}-A_{\mathcal{G}}\succeq0$.
\end{IEEEproof}
Using the above Lemma, we can rewrite the optimization problem \ref{Problem:OptimalVaccineDistribution}
as a convex optimization program, as follows. First, let us rewrite
(\ref{eq:OptimalVaccine}) using the change of variables $\gamma_{i}\triangleq\beta_{i}^{-1}$as,
\begin{eqnarray}
T^{*}\triangleq\min_{\left\{ \gamma_{i}\right\} } & \sum_{i=1}^{n}f_{i}\left(\gamma_{i}^{-1}\right)\nonumber \\
s.t. & \left(D-\varepsilon I\right)\Gamma-A_{\mathcal{G}}\succeq0\nonumber \\
 & \bar{\beta}_{i}^{-1}\leq\gamma_{i}\leq\underline{\beta}_{i}^{-1}, & i=1,\ldots,n,\label{eq:OptimalConvexVaccination}
\end{eqnarray}
where $\Gamma=diag\left(\gamma_{i}\right)$. Therefore, the feasible
set is convex in the space of variables $\gamma_{i}$, $i=1,\ldots,n$.
Furthermore, we now verify that the cost function $\sum_{i=1}^{n}f_{i}\left(\gamma_{i}^{-1}\right)$
is also convex under Assumption 1 by computing its second derivative,
\[
\frac{d^{2}}{d\gamma_{i}^{2}}\sum_{i}f_{i}\left(\gamma_{i}^{-1}\right)=f''_{i}\left(\gamma_{i}^{-1}\right)\frac{1}{\gamma_{i}^{4}}+2f'_{i}\left(\gamma_{i}^{-1}\right)\frac{1}{\gamma_{i}^{3}}\geq0,
\]
where the last inequality is obtained from Assumption 1, taking
into account that $\gamma_{i}^{-1}=\beta_{i}$.

The convex optimization program in (\ref{eq:OptimalConvexVaccination})
allows us to efficiently find the cost-optimal allocation of vaccines
to control the spread of an epidemic outbreak in a given contact network.
In the following subsection, we illustrate our approach in a real
social network.

\subsection{\label{sub:Numerical-Results}Numerical Results}
\begin{figure*}
\centering\includegraphics[width=1.0\textwidth]{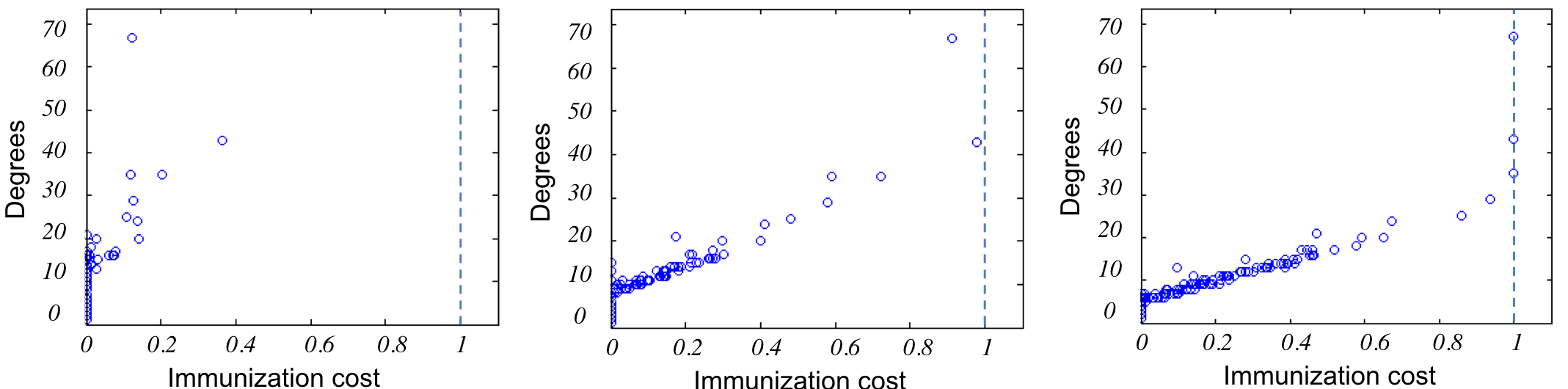}
\caption{Vaccination costs versus degree in a social network with 247 nodes.}
\label{fig_2}
\end{figure*}We illustrate our results by designing the optimal distribution of
vaccines in an online social network when the cost vaccination function
follows (\ref{eq:Quasiconvex Limit}). We consider a social network
with 247 nodes, and assume that the individuals in the network present
the same recovery rate, $\delta_{i}=\delta=0.1$. In this case, we
can rewrite (\ref{eq:OptimalConvexVaccination}) as a convex program
with a convenient structure, as follows. First, defining $a\triangleq\left(\underline{\beta}_{i}^{-1}-\bar{\beta}_{i}^{-1}\right)^{-1}$,
we have that 
\[
\sum_{i}f_{i}\left(\beta_{i}\right)=a\sum_{i}\beta_{i}^{-1}-a\sum_{i}\bar{\beta}_{i}^{-1}=a\text{�}\mbox{Trace}\left(\Gamma\right)-b,
\]
where $b\triangleq\alpha\sum_{i}\bar{\beta}_{i}^{-1}$. Hence, minimizing
$\sum_{i}f_{i}\left(\beta_{i}\right)$ is equivalent to minimizing
$\mbox{Trace}\left(\Gamma\right)$. Thus, the optimization problem
in (\ref{eq:OptimalConvexVaccination}) can be written as the following
semidefinite program (SDP):
\begin{eqnarray}
T^{*}\triangleq\min_{\Gamma} & \mbox{trace}\left(\Gamma\right)\nonumber \\
s.t. & \left(\delta-\varepsilon\right)\Gamma-A_{\mathcal{G}}\succeq0\nonumber \\
 & \bar{\beta}_{i}^{-1}\leq\gamma_{i}\leq\underline{\beta}_{i}^{-1}, & i=1,\ldots,n,\label{eq:OptimalConvexVaccination-1}
\end{eqnarray}
Given our network with 247 nodes, we now compute the optimal distribution
of vaccinations in several cases.

The network under consideration has a maximum eigenvalue $\lambda_{1}\left(A_{\mathcal{G}}\right)=13.52$.
In our simulations, we choose the value of $\bar{\beta}_{i}$ to induce
instability of the disease-free equilibrium in the absence of vaccination.
According to Proposition \ref{prop:HomogeneousSISstability}, if we
had a constant infection rate $\beta_{i}=\beta$ satisfying $\beta>\beta_{c}\triangleq\delta/\lambda_{1}\left(A_{\mathcal{G}}\right)=7.4e-3$,
the disease-free equilibrium is unstable. Hence, we choose a natural
infection rate $\bar{\beta}_{i}=\bar{\beta}>\beta_{c}$ to induce
an unstable infection in the absence of vaccinations. In our simulations,
individuals have the same natural infection rates $\bar{\beta}_{i}=\bar{\beta}$,
and study three cases: $\bar{\beta}\in\left\{ 1.2\beta_{c},1.8\beta_{c},2.4\beta_{c}\right\} $.
We choose the value of $\underline{\beta}_{i}<\beta_{c}$
to induce a stable disease-free equilibrium in the case of full-force
vaccination, i.e., we saturate all the individuals with vaccines to
shift their infection rates to $\underline{\beta}_{i}$. In our simulations
we use a minimum infection rate $\underline{\beta}_{i}=0.2\bar{\beta}_{i}=0.2\bar{\beta}$;
hence, we obtain that $\underline{\beta}_{i}\in\left\{ 0.24\beta_{c},0.36\beta_{c},0.48\beta_{c}\right\} $.
In other words, our vaccine reduces the infection rate to a 20\% of
the natural infection rate. Using these parameter values, we run three simulations,
each one with a different $\bar{\beta}$.

The results of our simulations are summarized in Fig. \ref{fig_2}. Each one
of the subplots in this figure correspond to a different value of
$\bar{\beta}\in\left\{ 1.2\beta_{c},1.8\beta_{c},2.4\beta_{c}\right\} $.
For each value of $\bar{\beta}$ we present a scatter plot with 247 data points (as many as individuals in the network),
where each point has an abscissa equal to $f_{i}\left(\beta_{i}\right)$
(the cost of vaccinating node $i$ with optimal fraction $\beta_i$) and an ordinate of $d_{i}$ (the
degree of $i\in\mathcal{V}(\mathcal{G})$). We observe that there
is a strong dependence between the cost of vaccinating a node and
its degree. In particular, we observe that there is almost an affine
relationship between the vaccination cost and the degree of a node,
with a saturation at the extreme cost values, 0 and 1. Also, we observe
that, as we increase the value of $\bar{\beta}$ the vaccination costs
tend to increase. This is because for larger $\bar{\beta}$, the
more virus is more infectious.

\section{Combinatorial Resource Allocation\label{sec:Combinatorial-Resource-Allocatio}}

In this Section, we consider a combinatorial versions of the fractional
vaccination problem studied in the previous section. In the fractional
vaccination problem, the optimal distribution of vaccines is allowed
to be in the feasible interval $\beta_{i}\in\left[\underbar{\ensuremath{\beta}}_{i},\bar{\beta}_{i}\right]$. In the combinatorial vaccination problem, we restrict the resources
to be in the discrete set, $\beta_{i}\in\left\{ \underbar{\ensuremath{\beta}}_{i},\bar{\beta}_{i}\right\} $.
For this case, we propose a greedy approach that provides an approximation
to the optimal combinatorial solution. We also provide quality guarantees
for this approximation algorithm in Subsection \ref{sub:Quality-Guarantee}.
The combinatorial vaccination problem can be stated as follows:
\begin{problem}
\emph{\label{Problem:CombinatorialVaccineDistribution}Given a curing
rate profile, $\left\{ \delta_{i}:i\in\mathcal{V}\left(\mathcal{G}\right)\right\} $,
and a vaccination cost function $f_{i}\left(\beta_{i}\right)$} for
$\beta_{i}\in\left\{ \underline{\beta}_{i},\bar{\beta}_{i}\right\} $,
\emph{find the optimal distribution of vaccines to control the propagation
of an epidemic outbreak with an asymptotic exponential decaying rate
$\varepsilon$ at a total minimum cost.}
\end{problem}
The optimal distribution of vaccines in Problem \ref{Problem:CombinatorialVaccineDistribution}
can be characterized by the set of individuals $I_{C}\subseteq\mathcal{V}\left(\mathcal{G}\right)$
that are chosen to be fully immunized, i.e., the infection rates are
switched from $\bar{\beta}_{i}$ to $\underline{\beta}_{i}<\bar{\beta}_{i}$
for $i\in I_{C}$. Let us assume that the vaccination cost function
takes the values $f_{i}\left(\bar{\beta}_{i}\right)=0$ and $f_{i}\left(\underline{\beta}_{i}\right)=c_{i}$.
These extreme values are achieved using the following affine cost
function
\[
f_{i}\left(\beta_{i}\right)\triangleq c_{i}\frac{\beta_{i}-\bar{\beta}_{i}}{\underline{\beta}_{i}-\bar{\beta}_{i}}.
\]
Hence, the total cost of vaccination satisfies
\[
\sum_{i=1}^{n}f_{i}\left(\beta_{i}\right)=a_{C}\sum_{i}c_{i}\beta_{i}-b_{C},
\]
where we have defined the constants $a_{C}\triangleq\left(\underbar{\ensuremath{\beta}}_{i}-\bar{\beta}_{i}\right)^{-1}$
and $b_{C}\triangleq a_{C}\sum_{i}c_{i}\bar{\beta}_{i}$. Thus, since
$a_{C}<0$, the optimal allocation of vaccines that minimizes $\sum_{i=1}^{n}f_{i}\left(\beta_{i}\right)$
is the same as the one that maximizes $\sum_{i}c_{i}\beta_{i}$. Therefore,
defining the vectors $c\triangleq\left(c_{1},\ldots,c_{n}\right)^{T}$
and $b\triangleq\left(\beta_{1},\ldots,\beta_{n}\right)^{T}$, Problem
\ref{Problem:CombinatorialVaccineDistribution} can be stated as the
following optimization problem:

\begin{eqnarray}
T_{C}^{*}=\max_{\left\{ \beta_{i}\right\} } & c^{T}b\nonumber \\
s.t. & \lambda_{1}\left(BA_{\mathcal{G}}-D\right)\leq-\varepsilon\label{eq:OptimalVaccine-1}\\
 & \beta_{i}\in\left\{ \underline{\beta}_{i},\bar{\beta}_{i}\right\} , & i=1,\ldots,n.\nonumber 
\end{eqnarray}
The solution to this problem is combinatorial in nature. In the following
subsections we provide a greedy approach that approximates the combinatorial
solution, as well as a quality guarantee of our approach.

\begin{figure*}
\centering
\newcolumntype{g}{>{\columncolor{Gray}}c}
\begin{tabular}{ | l | c || c| g | c | c |c|}
\hline
Parameters & Metric& Greedy& Reverse Greedy& Degree Threshold& Centrality Threshold& $D^*$\\
\hline\hline
$\bar\beta = 2.4 \beta_c$ & $c'b$&3.6298	 &3.6440&3.2892&2.4518&3.9425\\
\cline{2-7}
$\underline\beta = 0.2 \bar\beta$ & $\lambda_1(\delta B^{-1}-A)$&0.0054&0.0355&0.0422&0.1982&n/a \\
\hline \hline
$\bar\beta = 1.8 \beta_c$ & $c'b$& 3.0098&3.0098&2.9246&2.0092&3.1406\\
\cline{2-7}
$\underline\beta = 0.2 \bar\beta$ & $\lambda_1(\delta B^{-1}-A)$ &0.0850&0.1383&0.0774&0.2575&n/a\\
\hline \hline
$\bar\beta = 1.2 \beta_c$ & $c'b$ & 2.1484&2.1484&2.1201&1.7369&2.1787\\
\cline{2-7}
$\underline\beta = 0.2 \bar\beta$ & $\lambda_1(\delta B^{-1}-A)$ &0.4383&0.4383&0.6278&1.0101&n/a\\
\hline
\end{tabular}
\caption{Table with values of the objective function $c'b$ and the
residual value of $\lambda_{1}\left(\delta B^{-1}-A_{\mathcal{G}}\right)$
for each possible value of $\bar{\beta}_{i}$. \label{table}}
\end{figure*}
\subsection{Greedy approach\label{sub:Greedy-approach}}

In this subsection, we provide a greedy algorithm that iteratively
updates the set of nodes that will be (fully) vaccinated in order
to control the spreading of an epidemic outbreak. In each step of
our algorithm, we denote the set of nodes that are chosen to be part
of the vaccination group at $S_{t}$. We iteratively add to this group
the node that provides the most benefit per unit cost, where
the benefit of vaccinating a is the increment it induces in $\lambda_{1}\left(BA_{\mathcal{G}}-D\right)$.
More formally, given a vaccination group $S_{t}$, we define the diagonal
matrix of associated infection rates as $B_{S_{t}}\triangleq diag\left(\bar{\beta}_{i}\right)-(\bar{\beta}_{i}-\underline{\beta}_{i})diag(\boldsymbol{1}_{S_{t}})$,
where $\boldsymbol{1}_{S_{t}}$ is the $n$-dimensional indicator
vector for the set $S_{t}$. Thus, the benefit per unit cost of adding
node $i$ to $S_{t}$ is measured by the function
\[
\Delta\left(i,S_{t}\right)\triangleq\frac{\lambda_{1}\left(B_{S_{t}}A_{\mathcal{G}}-D\right)-\lambda_{1}\left(B_{S_{t}+\left\{ i\right\} }A_{\mathcal{G}}-D\right)}{c_{i}}.
\]
A conventional greedy approach could be defined by the iteration $S_{t+1}=S_{t}+\left\{ i_{t}\right\} $
with $S_{1}=\left\{ \right\} $ and $i_{t}\triangleq\arg\max_{i}\Delta\left(i,S_{t}\right)$,
where this iteration is repeated until $\lambda_{1}\left(B_{S_{t}}A_{\mathcal{G}}-D\right)\leq-\varepsilon$
is satisfied. Notice that the resulting vaccination group is feasible
and satisfies the spectral condition needed to control the spreading
of an epidemic outbreak. 

In practice, we observe that a modification of this greedy approach
provides better results. In this modified version, we start with a
vaccination set $S_{1}=\mathcal{V}\left(\mathcal{G}\right)$ (i.e.,
all the individuals are vaccinated) and iteratively remove individuals
according to the iteration $S_{t+1}=S_{t}-\left\{ j_{t}\right\} $
with $j_{t}=\arg\min_{j}\Delta\left(j,S_{t}\backslash\left\{ j\right\} \right)$,
where this iteration is repeated until $\lambda_{1}\left(B_{S_{t}}A_{\mathcal{G}}-D\right)\geq-\varepsilon$
is satisfied. The final vaccination group is chosen to be $S_{t-1}$.
Notice that, the resulting vaccination group is feasible and $\lambda_{1}\left(B_{S_{t-1}}A_{\mathcal{G}}-D\right)\leq-\varepsilon$.
We denote this approach the \emph{reverse greedy algorithm}.

Since our approach is heuristic for a combinatorial problem, we provide
a quality guarantee via Lagrange duality theory in the following subsection.

\subsection{Quality Guarantee\label{sub:Quality-Guarantee}}
Using Lagrange duality theory, we provide quality guarantees for
the performance of our greedy approach by computing the dual optimal $D_C^*$.
\begin{theorem} \label{SDP} Given the optimization problem 
\begin{eqnarray}
T_{C}^{*}= & \max_{b} & c^{T}b\label{primal}\\
 & \hbox{s.t.} & \left(D-\varepsilon I\right)B^{-1}-A_{\mathcal{G}}\succeq0\nonumber \\
 &  & \beta_{i}\in\{\underline{\beta}_{i},\bar{\beta}_{i}\},\mbox{ }\forall i,\nonumber 
\end{eqnarray}
the primal optimal $T_{C}^{*}$ can be upper bounded by $D_{C}^{*}$
computed according to the Lagrange dual 
\begin{eqnarray}
D_{C}^{*}= & \min_{Z,u} & \mathbf{1}^{T}u-trace(A_{\mathcal{G}}Z)\label{dual}\\
 & \hbox{s.t.} & u_{i}\ge c_{i}\bar{\beta}_{i}+\frac{\delta_{i}}{\bar{\beta}_{i}}Z_{ii}\,\forall i\nonumber \\
 &  & u_{i}\ge c_{i}\underline{\beta}_{i}+\frac{\delta_{i}}{\underline{\beta}_{i}}Z_{ii}\,\forall i\nonumber \\
 &  & Z\succeq0,\nonumber 
\end{eqnarray}
which is a convex Semidefinite Program. \end{theorem}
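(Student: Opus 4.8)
The plan is to establish the inequality $D_C^* \ge T_C^*$ purely by \emph{weak} Lagrangian duality, since the integrality constraints $\beta_i \in \{\underline\beta_i,\bar\beta_i\}$ render the primal nonconvex, and the theorem only claims an upper bound (so strong duality is neither available nor required). Note first that the primal's matrix constraint $(D-\varepsilon I)B^{-1}-A_{\mathcal{G}}\succeq0$ is exactly the reformulation of the spectral condition $\lambda_1(BA_{\mathcal{G}}-D)\le-\varepsilon$ supplied by the preceding Lemma, so the program (\ref{primal}) does encode Problem \ref{Problem:CombinatorialVaccineDistribution}. I would then dualize this single matrix inequality with a positive semidefinite multiplier $Z\succeq0$, keeping the discrete constraint inside the domain of the inner maximization. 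The Lagrangian is
\[
L(b,Z) = c^T b + \mathrm{trace}\!\left(Z\big[(D-\varepsilon I)B^{-1} - A_{\mathcal{G}}\big]\right),
\]
and for any primal-feasible $b$ and any $Z\succeq0$ the trace term is nonnegative, whence $L(b,Z)\ge c^T b$. Taking the supremum over the full discrete set and then the infimum over $Z\succeq0$ gives $\min_{Z\succeq0}\sup_b L(b,Z)\ge T_C^*$, which is the desired bound.

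The core computation is to evaluate the dual function $g(Z)=\sup_b L(b,Z)$ in closed form. The key structural observation is that $B^{-1}=\mathrm{diag}(\beta_i^{-1})$ is diagonal, so $\mathrm{trace}\!\left(Z(D-\varepsilon I)B^{-1}\right)=\sum_i(\delta_i-\varepsilon)\beta_i^{-1}Z_{ii}$ depends on $b$ only through the diagonal of $Z$, while $\mathrm{trace}(A_{\mathcal{G}}Z)$ does not depend on $b$ at all. The maximization over $b$ therefore decouples coordinatewise:
\[
g(Z) = \sum_{i=1}^n \max_{\beta_i \in \{\underline\beta_i,\bar\beta_i\}}\Big(c_i\beta_i + (\delta_i-\varepsilon)\beta_i^{-1}Z_{ii}\Big) - \mathrm{trace}(A_{\mathcal{G}}Z).
\]
Each inner maximum is over just two points, so I would linearize it with an epigraph variable $u_i$ subject to $u_i\ge c_i\bar\beta_i+(\delta_i-\varepsilon)\bar\beta_i^{-1}Z_{ii}$ and $u_i\ge c_i\underline\beta_i+(\delta_i-\varepsilon)\underline\beta_i^{-1}Z_{ii}$; minimizing $\mathbf{1}^T u$ then drives each $u_i$ down to precisely the two-point maximum. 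Setting $\varepsilon=0$ (the normalization reflected in the reported metric $\lambda_1(\delta B^{-1}-A_{\mathcal{G}})$) reduces $\delta_i-\varepsilon$ to $\delta_i$ and reproduces the two constraint families in (\ref{dual}) verbatim. Collecting the pieces yields exactly $D_C^*=\min_{Z,u}\mathbf{1}^T u-\mathrm{trace}(A_{\mathcal{G}}Z)$ over $Z\succeq0$ and these constraints.

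Convexity is then immediate and I would verify it last: the objective is linear in $(u,Z)$, the epigraph constraints are affine (the $Z_{ii}$ enter linearly), and the remaining constraint $Z\succeq0$ is the positive semidefinite cone, so the feasible region is a spectrahedron and the whole program is a convex SDP, as claimed.

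I expect the only genuine subtlety to be conceptual rather than computational: one must resist expecting \emph{strong} duality, because the primal is a nonconvex combinatorial maximization and $B^{-1}$ is not a convex function of the entries of $b$. Weak duality, however, holds unconditionally and needs no constraint qualification, which is all the statement asserts. Two small points I would check carefully are (i) the multiplier sign — for a maximization with a $\succeq0$ constraint the correct Lagrangian \emph{adds} $\mathrm{trace}(Z\,[\cdot])$ with $Z\succeq0$, which is exactly what makes $L(b,Z)\ge c^T b$ on the feasible set and hence produces an \emph{upper} bound after the sup–inf; and (ii) finiteness of the inner supremum for every $Z\succeq0$, which holds because it is a maximum over exactly two values per coordinate, so $g(Z)$ is well defined on all of $\{Z\succeq0\}$ and the epigraph reformulation is exact.
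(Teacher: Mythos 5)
Your proposal is correct and follows essentially the same route as the paper's proof: dualize the single semidefinite constraint with a multiplier $Z\succeq0$, exploit the diagonal structure of $B^{-1}$ to decouple the inner maximization over the two-point sets coordinatewise, introduce the epigraph variables $u_i$, and conclude by weak duality. Your explicit handling of the $\varepsilon$ term (which the paper silently drops when passing from the constraint $(D-\varepsilon I)B^{-1}-A_{\mathcal{G}}\succeq0$ to the dual constraints written with $\delta_i$ alone) and your check of the multiplier sign for a maximization are welcome clarifications but do not change the argument.
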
 \begin{proof}
Notice that matrix in the semidefinite constrain can be written as
$\left(D-\varepsilon I\right)B^{-1}-A_{\mathcal{G}}=\sum_{i}e_{i}e_{i}'\frac{\delta_{i}-\varepsilon}{\beta_{i}}-A_{\mathcal{G}}$,
where $e_{i}$ is the unit vector in the standard basis. From \eqref{primal},
we construct the Lagrangian 
\begin{equation}
\mathcal{L}(b,Z)=c^{T}b+trace\left(Z\left(\sum_{i}e_{i}e_{i}'\frac{\delta_{i}}{\beta_{i}}-A_{\mathcal{G}}\right)\right),
\end{equation}
where $\beta_{i}\in\{\underline{\beta}_{i},\bar{\beta}_{i}\}$ is
kept as a domain constraint and $Z\succeq0$. See Section 5.9 of \cite{BV04}
for further details on the Lagrange dual of semidefinite constraints.
Using the properties of trace to simplify and decouple we get 
\begin{equation}
\mathcal{L}(b,Z)=\sum_{i}\left(c_{i}\beta_{i}+\frac{\delta_{i}}{\beta_{i}}Z_{ii}\right)-trace(ZA_{\mathcal{G}}).\label{lagrange}
\end{equation}
The dual objective is derived by maximizing the Lagrangian with respect
to the primal variables 
\begin{equation}
q(Z)=\sum_{i}\left(\max_{\beta_{i}}c_{i}\beta_{i}+\frac{\delta_{i}}{\beta_{i}}Z_{ii}\right)-trace(ZA_{\mathcal{G}}).\label{q}
\end{equation}
Due to the decoupling in \eqref{lagrange} the primal optimization
in \eqref{q} can be done for each node, independently. Since each
node has only 2 options we can consider each case explicitly by defining
\begin{equation}
u_{i}=\max\left\{ c_{i}\bar{\beta}_{i}+\frac{\delta_{i}}{\bar{\beta}_{i}}Z_{ii},c_{i}\underline{\beta}_{i}+\frac{\delta_{i}}{\underline{\beta}_{i}}Z_{ii}\right\} .\label{t}
\end{equation}
It is possible to compute $u_{i}$ as a threshold function of $Z_{ii}$,
but for the purpose of constructing the dual it is better to use an
epigraph formulation to rewrite \eqref{q} as 
\begin{equation}
q(Z,u)=\sum_{i}u_{i}-trace(ZA_{\mathcal{G}})\label{q2}
\end{equation}
with the addition constraints that 
\begin{eqnarray}
u_{i} & \ge & c_{i}\bar{\beta}_{i}+\frac{\delta_{i}}{\bar{\beta}_{i}}Z_{ii}\label{tbar}\\
u_{i} & \ge & c_{i}\underline{\beta}_{i}+\frac{\delta_{i}}{\underline{\beta}_{i}}Z_{ii}.\label{tunder}
\end{eqnarray}
Since the dual is a minimization and $q(Z,u)$ is strictly increasing
in $u$, either \eqref{tbar} or \eqref{tunder} must be achieved
with equality, ensuring that the definition \eqref{t} is satisfied
at the optimal point. To conclude, our dual \eqref{dual} is given
by minimizing \eqref{q2} subject to the domain constraint $Z\succeq0$
and the epigraph constraints \eqref{tbar} and \eqref{tunder}. This
is a standard form SDP as defined in section 4.6 of \cite{BV04}.
The solution $D_{C}^{*}$ is guaranteed to satisfy $D_{C}^{*}\ge T_{C}^{*}$
by weak duality, \cite{BV04} Section 5.2. \end{proof}

Theorem \ref{SDP} tells us that for any optimization problem of the
form \eqref{primal} we can get an accuracy certificate 
\begin{equation}
T_{C}^{*}-c^{T}b\le D_{C}^{*}-c^{T}b
\end{equation}
by solving the dual \eqref{dual}. Since we do not have a strong duality,
we do not expect $c^{T}b=D_{C}^{*}$ to be attainable (i.e, $P_C^* < D_C^*$).

\begin{remark} The solution to the dual gives us some insight into
the primal optimizers via the threshold solution to \eqref{t}, 
\begin{equation}
u_{i}(Z_{ii})=\left\{ \begin{array}{ll}
c_{i}\bar{\beta}_{i}+\frac{\delta_{i}}{\bar{\beta}_{i}}Z_{ii} & \hbox{if }Z_{ii}\le\frac{c_{i}}{\delta_{i}}\bar{\beta}_{i}\underline{\beta}_{i}\\
c_{i}\underline{\beta}_{i}+\frac{\delta_{i}}{\underline{\beta}_{i}}Z_{ii} & \hbox{if }Z_{ii}\ge\frac{c_{i}}{\delta_{i}}\bar{\beta}_{i}\underline{\beta}_{i}
\end{array}\right..\label{t2}
\end{equation}
It appears we can deduce the primal optimizers $b^{*}$ from $Z^{*}$,
but in practice for most nodes $i$, $Z_{ii}^{*}=\bar{\beta}_{i}\underline{\beta}_{i}c_{i}/\delta_{i}$
making it impossible to determine $\beta_{i}^{*}$. In some cases
there are nodes that have $Z_{ii}^{*}$ not equal to the threshold.
These nodes have their optimal action specified by $Z_{ii}^{*}$ and
\eqref{t2}. This at least allows for a reduction of the dimension
of the primal problem which due to its combinatorial form could be
a very large improvement. \end{remark} 

\subsection{Numerical Results}
\begin{figure}[t]
\centering\includegraphics[width=0.45\textwidth]{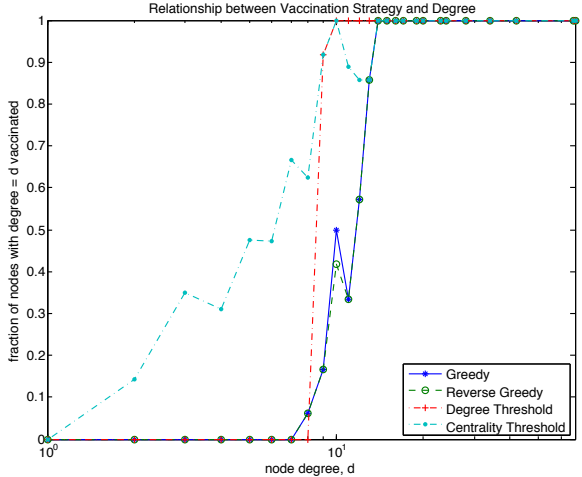}
\caption{the relationship between the outcome
of each algorithm and the degrees of the nodes. We represent degrees in log scale versus the fraction of nodes of a particular degree that are vaccinated in
the solution generated by each algorithm.}
\label{fig_3}
\end{figure}

Several papers in the literature advocate for vaccination strategies
based on popular centrality measures, such as the degree or eigenvector
centrality \cite{CHT09}. In this subsection, we compare our greedy
heuristic to vaccination strategies based on centrality measures.
In our simulation, we use the adjacency matrix with 247 nodes previously
used in Subsection \ref{sub:Numerical-Results} and the same values
for the parameters $\delta_{i}=\delta=0.1$, $\beta_{c}=\delta/\lambda_{\hbox{max}}(A_{\mathcal{G}})=7.4e-3$,
$\bar{\beta}_{i}\in\left\{ 1.2\beta_{c},1.8\beta_{c},2.4\beta_{c}\right\} $
and$\underline{\beta}_{i}=0.2\bar{\beta}_{i}$ for all $i$. In Table
\ref{table}, we include the values of the objective function $c'b$ and the
residual value of $\lambda_{1}\left(\delta B^{-1}-A_{\mathcal{G}}\right)$
for each possible value of $\bar{\beta}_{i}$. In each case, we run
the greedy algorithm and the reverse greedy algorithm (both proposed
in Section \ref{sub:Greedy-approach}), as well as two previously
proposed algorithms based on the degree and the eigenvalue centrality
metrics. In the last column of Table \ref{table}, we also include the upper
bound provided by Theorem \ref{SDP}. Observe that our greedy algorithms
are always within 10\% of the upper bound $D_{C}^{*}$. Furthermore,
the reverse greedy algorithm is outperforms the others, specially
those based on centrality measures. 

In Fig. \ref{fig_3}, we illustrate the relationship between the outcome
of each algorithm and the degrees of the nodes. In the abscissae,
we represent degrees in log scale, and in the ordinate we provide
the fraction of nodes of a particular degree that are vaccinated in
the solution generated by each algorithm. We observe how all four
algorithms completely vaccinate the set nodes with degrees beyond
a threshold. On the other hand, in the range of intermediate degrees,
we observe that degree alone is not sufficient information to decide
the vaccination level of a node. In other words, simply vaccinating nodes based on degree does not always provide the best results.

\begin{figure}[t]
\centering\includegraphics[width=0.45\textwidth]{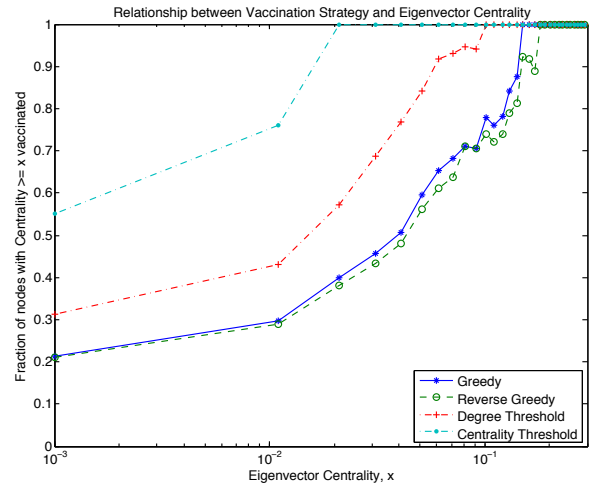}
\caption{Relationship between the outcome
of each algorithm and the eigenvector centrality. In the abscissae,
we represent the cumulative fraction of nodes with centrality greater
or equal to a given value being vaccinated in the outcome of each
one of the four algorithms under consideration.}
\label{fig_4}
\end{figure}

In Fig. \ref{fig_4}, we illustrate the relationship between the outcome
of each algorithm and the eigenvector centrality. In the abscissae,
we represent the cumulative fraction of nodes with centrality greater
or equal to a given value being vaccinated. We observe how all
four algorithms completely vaccinate the set nodes with the highest centralities. However, since the curves in this figure are
not monotonically increasing, there must be cases in which lower centrality
nodes are vaccinated, but other nodes with higher centrality are left
unvaccinated. In other words, vaccinating higher centrality nodes
does not always provide the best results.

The reason neither degree nor centrality adequately capture the importance
of nodes is that the eigenvectors of the matrix $\delta B^{-1}-A_{\mathcal{G}}$
change when the set of vaccinated nodes change. The shift in the eigenvectors
is a result of the fact that optimal vaccination strategy actually depends on the parameters
$\bar{\beta}$ and $\underline{\beta}$, not just the network. With this in mind, we cannot expect an optimal
solution to arise from an algorithm that depends only on the graph
structure. Our algorithms work because they are greedy with respect
to this effect. 

\section{Conclusions}

We have studied the problem of controlling the dynamic of the SIS
epidemic model in an arbitrary contact network by distributing vaccination
resources throughout the network. Since the spread of an epidemic
outbreak is closely related to the eigenvalues of a matrix that depends
on the network structure and the parameters of the model, we can formulate
our control problem as a spectral optimization problem in terms of
semidefinite constraints. In the fractional vaccination case, where
intermediate level of vaccination are allowed, we have proposed a
convex optimization framework to efficiently find the optimal allocation
of vaccines when the function representing the vaccination cost satisfies
certain convexity assumptions. In the combinatorial vaccination problem,
where individuals are not allowed to be partially vaccinated,
we propose a greedy approach with quality guarantees based on Lagrangian
duality. We illustrate our results with numerical simulations in a
real online social network.

\end{document}